\newtheorem{lemma}{Lemma}
\newtheorem{proposition}{Proposition}
\theoremstyle{definition}
\newtheorem{example}{Example}
\newcommand{\R}{\mathbb{R}}
\newcommand{\C}{\mathbb C} 
\newcommand{\integer}{\mathbb Z} 
\newcommand{\half}{\tfrac{1}{2}} 
\newcommand{\mo}[1]{\left| #1 \right|} 
\newcommand{\hi}{\mathcal{H}} 
\newcommand{\ki}{\mathcal{K}} 
\newcommand{\K}{\mathcal{K}} 
\newcommand{\lh}{\mathcal{L(H)}} 
\newcommand{\lk}{\mathcal{L(K)}} 
\newcommand{\trh}{\mathcal{T(H)}} 
\newcommand{\trhk}{\mathcal{T}(\mathcal{H}\otimes \mathcal{K})} 
\newcommand{\sh}{\mathcal{S(H)}} 
\newcommand{\ph}{\mathcal{P(H)}} 
\newcommand{\sk}{\mathcal{S(K)}} 
\newcommand{\uh}{\mathcal{U(H)}} 
\newcommand{\ip}[2]{\left\langle\,#1\,|\,#2\,\right\rangle} 
\newcommand{\<}{\langle} 
\renewcommand{\>}{\rangle} 
\newcommand{\kb}[2]{|#1\rangle\langle#2|} 
\newcommand{\tr}[1]{\textrm{tr}\left[#1\right]} 
\newcommand{\ptr}[1]{\textrm{tr}_{\mathcal{K}}[#1]} 
\newcommand{\id}{\mathbbm{1}} 
\newcommand{\supp}[1]{\textrm{supp}[#1]} 
\newcommand{\mc}[1]{\mathcal{#1}} 
\newcommand{\Ao}{\mathsf{A}}
\newcommand{\A}{\mathsf{A}}
\newcommand{\Bo}{\mathsf{B}}
\newcommand{\Eo}{\mathsf{E}}
\newcommand{\E}{\mathsf{E}}
\newcommand{\Fo}{\mathsf{F}}
\newcommand{\Zo}{\mathsf{Z}}
\newcommand{\Z}{\mathsf{Z}}
\begin{document}

\title[]{Limitations on post-processing assisted quantum programming}
\author[Heinosaari]{Teiko Heinosaari}
\author[Miyadera]{Takayuki Miyadera}
\author[Tukiainen]{Mikko Tukiainen}
\address{\textbf{Teiko Heinosaari}; Turku Centre for Quantum Physics, Department of Physics and Astronomy, University of Turku, Finland}
\email{teiko.heinosaari@utu.fi}
\address{\textbf{Takayuki Miyadera}; Department of Nuclear Engineering, Kyoto University - 6068501 Kyoto, Japan}
\email{miyadera@nucleng.kyoto-u.ac.jp}
\address{\textbf{Mikko Tukiainen}; Turku Centre for Quantum Physics, Department of Physics and Astronomy, University of Turku, Finland}
\email{mikko.tukiainen@utu.fi}

\maketitle

\begin{abstract}
A quantum multimeter is a programmable device that can implement measurements of different observables depending on the programming quantum state inserted into it. The advantage of this arrangement over a single purpose device is in its versatility: one can realize various measurements simply by changing the programming state. The classical manipulation of measurement output data is known as post-processing. In this work we study the post-processing assisted quantum programming, which is a protocol where quantum programming and classical post-processing are combined. We provide examples showing that these two processes combined can be more efficient than either of them used separately. Furthermore, we derive an inequality relating the programming resources to their corresponding programmed observables, thereby enabling us to study the limitations on post-processing assisted quantum programming.

\end{abstract}

\section{Introduction}\label{sec:intro}

A programmable device can operate in different ways depending on the instructions inserted into it.
A quantum device may take these instructions not only in the form of classical data, but also in the form of a quantum state \cite{NiCh97}. 
If distinct programs correspond to distinguishable quantum states, then the situation is no different from classical programming.
Yet the setting changes when the instructions correspond to non-orthogonal pure quantum states; this may allow one to do something that is not possible with classical programming.
Using quantum resources instead of their classical counterparts in programming a device may have advantages, as in the case of a semiquantum Bell scenario \cite{Buscemi12}.

Any quantum measurement can be viewed as a process where the system of interest is brought into contact with an initially uncorrelated apparatus and, after an interaction, the value of the measured quantity is read from the apparatus' pointer scale \cite{Ozawa84}. A quantum measurement set-up may be used as a programmable device by considering the state of the apparatus as a program \cite{ZiBu05}.
This kind of a quantum multimeter can implement measurements of different observables even if the other parts of the device, except the programming state, are kept fixed.
However, it is known that for two different sharp quantum observables the corresponding programming states must be orthogonal \cite{DaPe05,HeTu15} and hence, in this case, there is no quantum advantage over classical programming.

In another related scenario an experimenter does not change any part of a measurement device, but is allowed to perform classical operations to post-process the measurement data.
In this way one can implement sets of quantum observables that are jointly measurable \cite{AlCaHeTo09}.
The limitation in the case of sharp observables is that only commuting sharp observables are jointly measurable \cite{Lahti03}.

Interestingly, it was observed in \cite{ZiBu05} that by allowing both quantum programming and classical post-processing it is possible to overcome the limitations that each procedure possesses when performed separately.
Indeed, it was demonstrated that there exists a programmable device that implements, after post-processing, non-commuting sharp qubit observables by using non-orthogonal programming states.

In this work we generalize the multimeter presented in \cite{ZiBu05} and show that  a similar defeat of the limitations is possible in all finite dimensions. This generalization is achieved by studying the programmability of covariant observables. We then continue by deriving an inequality between the programming resources -- both classical and quantum -- and the corresponding programmed observables. 
This result reveals the true underlying limitations of post-processing assisted quantum programming of observables.

\section{Preliminaries}

In this preliminary section we fix the notation and recall some basic concepts used throughout this work.
We denote a complex separable Hilbert space by $\hi$, of either finite or countably infinite dimension. 
The set of bounded linear operators on $\hi$ is denoted by $\lh$, the set of trace class operators by $\trh$ and the set of projections by $\ph$. The identity element in $\lh$ is denoted by $\id_\hi$. The support of an operator $\Bo\in \lh$ is denoted by $\supp{\Bo}$.

\subsection{Quantum states and channels}

A \emph{quantum state} is represented by a positive operator $\varrho \in \trh$ with $\tr{\varrho}=1$ and the set of quantum states of $\hi$ is denoted by $\sh$. States $\varrho_1$ and $\varrho_2$ are said to be orthogonal when $\supp{\varrho_1} \cap \supp{\varrho_2} = \{0\}$. The extremal elements of $\sh$ are called \emph{pure} or equivalently \emph{vector states}, since any such element can be written as $\varrho = P_\psi$ for some unit vector $\psi \in \hi$, where $P_\psi \varphi = \ip{\psi}{\varphi}\psi$ for all $\varphi\in\hi$.

The transformations of quantum states are represented as normal linear mappings $\mc E: \trh \rightarrow \mc T(\hi')$ that are completely positive and trace-preserving: such transformations are called {\it quantum channels}. Equivalently, $\mc E$ is a quantum channel whenever its dual mapping $\mc E^*: \mc L(\hi') \rightarrow \lh$, defined via the duality relation $\tr{B \, \mc E(T)} = \tr{\mc E^*(B)\, T}$ for all $B \in \mc L(\hi'), T\in \trh$, is completely positive and unital, that is $\mc E^*(\id_{\hi'})=\id_\hi$.

\subsection{Observables and post-processing}

Let $\Omega_n = \{x_1, x_2,\ldots, x_n\}$ be a non-empty finite set.
We say that a map $\Eo: \Omega_n \rightarrow \lh$ is an $n$-valued {\it quantum observable} whenever $\Eo(x)\geq 0$ for all $x\subset\Omega_n$ and $\sum_i \Eo(x_i) = \id_\hi$, that is whenever $\Eo$ is a positive operator-valued measure (POVM). The non-zero operators $\Eo(x)\geq 0$, $x\subset\Omega$ in the range of an observable are called effects. Furthermore, we call an observable $\Ao: \Omega_n \rightarrow \lh$ {\it sharp}, or a projection-valued measure (PVM), if all the effects of $\Ao$ are projections: $\Ao(x)^2 = \Ao(x)$ for every $x \subset \Omega_n$.
We often shorten the notation by writing $\Eo(i) = \Eo(x_i)$.  

The observables on $(\hi,\Omega_n)$ form a convex set which is denoted by $\mc O(\hi,\Omega_n)$. We say that the conditional probability $\lambda: (\Omega_n, \Omega_m) \rightarrow [0,1]$ is a {\it post-processing} of $\E \in \mc O(\hi,\Omega_n)$ into $\Fo \in \mc O(\hi,\Omega_m)$ if $\lambda(x_i | \cdot)$ is a probability measure for every $x_i\in\Omega_n$ and $\Fo(y) = \sum_i \lambda(x_i | y) \E(x_i)$ for all $y\subset \Omega_m$;  in such a case we briefly write $\Fo = \lambda \star \E$. 

We call the extremal elements of a set of observables are called {\it extremal observables}. It is known, that the only to produce extremal observables from another observable by means of post-processing is by merging together effects of the other observable. In other words, if $\Eo\in\mc O(\hi, \Omega_n)$ is an observable that can be post-processed into an extremal observable $\Fo\in \mc O(\hi, \Omega_m)$ with $\lambda: (\Omega_n,\Omega_m) \rightarrow [0,1]$, then $\lambda(x_i|y) \in \{0,1\}$ for all $x_i\in \Omega_n$ and $y\subset \Omega_m$ \cite{JePu07}.

\subsection{Measurements and programmable multimeters}

A {\it measurement set-up} is mathematically given by a 4-tuple $\langle \K, \Zo,\mc V, \xi \rangle$, where $\K$ is the Hilbert space associated with the measurement apparatus, $\Zo: \Omega_n \rightarrow \lk$ is the pointer observable, $\mc V: \trhk \rightarrow \trhk$ is a quantum channel describing the measurement interaction and $\xi \in \sk$ is the initial probe state \cite{QTM96}.
The observable on $\hi$ measured in this process is given by
\begin{eqnarray}\label{eq:repro}
\Eo(x) = \ptr{\mc V^*\left(\id_\hi \otimes \Zo(x)\right) \, \id_\hi \otimes \xi} \,, \qquad x \subset \Omega_n\,.
\end{eqnarray}
In the case such as given above, we say that the observable $\Eo$ is realized by the measurement $\langle \ki, \Zo, \mc V, \xi \rangle$.

From Eq.\,\eqref{eq:repro} it is readily concluded that measurements may be altered to realise different observables by changing the initial probe state.
We will call such an action {\it quantum programming} and the programmable device, mathematically described by a triplet $\langle \K, \Zo,\mc V \rangle$, a {\it quantum multimeter} \cite{DaPe05, HeTu15, Dusek2002, FiDuFi02}. Any initial state of the probe then acts as a programming instruction for the multimeter and is thus called a {\it programming state}.

The programming states of any two unequal sharp observables are necessarily orthogonal regardless of the multimeter; see \cite{HeTu15} and Ex.\,\ref{ex:sharportho}. More generally, this orthogonality persists whenever the two programmed observables can be post-processed into different sharp observables via some fixed post-processing. Interestingly, this orthogonality need no longer hold if the post-processings are allowed to be different \cite{ZiBu05,HeTu15}, as we shall demonstrate shortly. Therefore, one can introduce post-processing as an additional programming resource, that is the programmed observable
\begin{eqnarray}
\Eo_{ij} (x) = \sum_k \lambda_j(x_k | x) \, \ptr{\mc V^*\left(\id_\hi \otimes  \Zo(x_k)\right) \, \id_\hi \otimes \xi_i} \,, \quad x\subset \Omega_m 
\end{eqnarray}
depends, not only on the state $\xi_i \in \sk$, but also on the post-processing $\lambda_j: (\Omega_n, \Omega_m) \rightarrow [0,1]$. To emphasize the difference to mere state dependent programming we will call this scenario {\it post-processing assisted quantum programming}.
 
\section{Programming of covariant observables} \label{sec:covariant}

In this section we use the framework of covariant observables to generalize the multimeter presented in \cite{ZiBu05}.
The main conclusion is that with post-processing assisted programming it is possible to realize different sharp observables with non-orthogonal programming states. 

\subsection{Structure and programming of covariant observables}

We begin by recalling some basic facts about covariant observables \cite{SSQT01,GTAQT06}.
Let $\hi$ be a finite $d$-dimensional Hilbert space.
Let $G$ be a finite group with $\# G$ elements and $U:G\to\uh$ its irreducible projective representation.
For each $\xi\in\sh$, we define a mapping $\Eo_\xi:G\to\lh$ by formula
\begin{equation}\label{eq:seed}
\Eo_\xi(g) = \frac{d}{\# G} U(g) \, \xi \, U(g)^\ast \, .
\end{equation}
The operators $\Eo_\xi(g)$ are positive and it is a direct consequence of Schur's lemma that $\sum_g \Eo_\xi(g) = c \id$ for some constant $c$. 
By calculating the traces of both sides we see that $c=1$ and hence $\Eo_\xi$ is an observable. The structure of an observable $\Eo_\xi$ is completely determined by the state $\xi$, which is called the seed operator for $\Eo_\xi$.
In the case of a pure state $\xi=P_\psi$, we also denote $\Eo_\xi=\Eo_\psi$.

We recall from \cite{DaPe05} that all observables $\Eo_\xi$ of the form of Eq.\,\eqref{eq:seed} can be programmed with a single multimeter. 
Namely, fix an orthonormal basis $\{\phi_\ell \}_{\ell=1}^d$ of $\hi$.
For each $g\in G$, we define a unit vector $u(g)\in\hi\otimes\hi$ by
\begin{equation*}
u(g) = \frac{1}{\sqrt{d}} \sum_{k,\ell} \ip{\phi_k}{U(g)\phi_\ell} \phi_k \otimes \phi_\ell = U(g) \otimes \id (\frac{1}{\sqrt{d}} \sum_\ell \phi_\ell \otimes \phi_\ell ) \, .
\end{equation*}
For each $g\in G$, we define $\Zo(g)=\tfrac{d^2}{\# G} \kb{u(g)}{u(g)}$.
A direct calculation gives
\begin{equation}
\mathrm{tr}[\Zo(g) \, \varrho \otimes \xi^T  ]= \frac{d}{ \# G} \tr{\varrho \, U(g)\, \xi \,U(g)^*} =  \tr{\Eo_\xi(g) \, \varrho } \, ,
\end{equation}
for all $\varrho$, where the transpose $\xi^T$ is defined with respect to the orthonormal basis $\{\phi_\ell \}_{\ell=1}^d$. Since any $\xi$ defines a normalized POVM via Eq.\,\ref{eq:seed}, this also shows that $\sum_g \Zo(g) =\id$, hence $\Zo$ is an observable on $\hi\otimes\hi$.
Finally, we define the partial SWAP channel $\mc V \big(A\otimes B\otimes C \big) = B\otimes A \otimes C$ for all $A,B,C \in \lh$. Then $\< \hi\otimes \hi, \Zo, \mc V \>$ is a multimeter which for any $\eta \otimes \xi^T$, $\eta \in \sh$ being arbitrary, realizes the observable $\Eo_{\xi}(g) = \frac{d}{\# G} U(g) \,\xi\, U(g)^\ast$ (see Fig.\,\ref{fig:covmultimeter}).
Interestingly, this result implies that if two unequal covariant observables, that have the same irreducible representation but different seeds $\xi_1$ and $\xi_2$, are either sharp or can be made into unequal sharp observables with a fixed post-processing, then $\xi_1$ and $\xi_2$ are necessarily orthogonal. 
\begin{figure}[t!]
\includegraphics[width=0.9\textwidth]{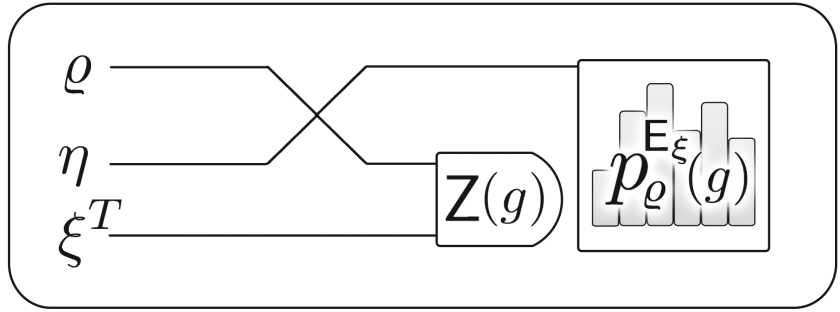}
\caption{A programmable quantum multimeter that can be used to realize all observables $\Eo_{\xi}(g) = \frac{d}{\#G} U(g) \,\xi\, U(g)^\ast$, $\xi \in \sh$, corresponding to the same irreducible projective representation $U$ with the programming states $\eta\otimes \xi^T$.}\label{fig:covmultimeter}
\end{figure}

\subsection{Obtaining sharp observables from cyclic subgroups}\label{sec:3.2}

As in the previous subsection, $U$ is an irreducible projective representation of $G$ of degree $d$.
We make an additional assumption that $G$ has a cyclic subgroup $H=\<h\>$ of order $\# G/d$.
This is obviously a restrictive assumption and allows only to use specific groups.
However, we will later demonstrate by examples that there are useful instances of this type. 

Our aim is to find sharp observables as post-processings of covariant observables. 
The method proceeds as follows:  We fix an eigenvector $\psi$ of the unitary operator $U(h)$.
For any $g\in H$, the vector $\psi$ is also an eigenvector of $U(g)$ and hence
\begin{equation}
U(g) P_\psi U(g)^\ast = P_\psi \, .
\end{equation}
It follows that
\begin{equation}
\sum_{g \in H} \Eo_{\psi}(g) =  \frac{d}{\# G} \sum_{g \in H} U(g) P_\psi U(g)^\ast = P_\psi \, .
\end{equation}
For a different left coset $g'H \neq H$ we get
\begin{equation}
\sum_{g \in g'H} \Eo_{\psi}(g) =  U(g') P_\psi U(g')^\ast\, .
\end{equation}
We conclude that by dividing $G$ into the left cosets of $H$ and forming the respective post-processing of $\Eo_{\psi}$, we obtain a sharp observable with $d$ outcomes.
 
Different eigenvectors of the unitary operator $U(h)$ may lead to different sharp observables. 
However, since the applied post-processing is determined by the subgroup $H$ and is therefore the same for all those sharp observables, these programming vectors are necessarily orthogonal. 

The described method above becomes more effective when $G$ has several cyclic subgroups $H_1=\<h_1\>,\ldots,H_n=\<h_n\>$ of the same order.
Further, we impose that any two operators $U(h_j)$ and $U(h_k)$ do not commute when $j\neq k$, so that it is guaranteed that we can choose non-orthogonal (and non-parallel) eigenvectors $\psi_1,\ldots,\psi_n$ of $U(h_1),\ldots,U(h_n)$, respectively. From a practical point of view, it is easier to start with a group that has several cyclic subgroups of the same order and then check if there exists an irreducible projective representation with the desired degree. 

\subsection{Examples}

\subsubsection*{Quaternion group}

The quaternionic group $Q_8$ consists of $8$ elements $\pm 1, \pm i , \pm j , \pm k$ satisfying the equations
\begin{equation}
i^2=j^2=k^2=ijk=-1 
\end{equation}
and $-1q=q(-1)=-q$ for each $q\in Q_8$.
There are three cyclic subgroups of order $4$: $\<i\>$, $\<j\>$ and $\<k\>$.

The previously described construction works for an irreducible projective representation of degree $8/4=2$.
We thus choose the following irreducible representation of $Q_8$:
\begin{equation*}
U(\pm 1) = \pm \id \, , \quad U(\pm i) = \pm i \sigma_x \, , \quad U(\pm j) = \mp i \sigma_y \, , \quad  U(\pm k) = \pm i \sigma_z \, .
\end{equation*}
We get three different sharp observables, the programming vectors $\psi_1,\psi_2,\psi_3$ being eigenvectors of $U(i),U(j),U(k)$, respectively.
They can be chosen such that the corresponding one-dimensional projections are:
\begin{eqnarray}
P_{\psi_1}  =  \half ( \id + \sigma_x), \quad P_{\psi_2}  =  \half (\id + \sigma_y), \quad P_{\psi_3} =  \half (\id + \sigma_z)\,.
\end{eqnarray}
The sharp observables obtained by this post-processing assisted programming are hence the complementary qubit observables consisting of the projections $\half (\id \pm \sigma_x)$, $\half (\id \pm \sigma_y)$ and $\half (\id \pm \sigma_z)$, respectively.
We have thus reproduced the result first presented in  \cite{ZiBu05}.

\subsubsection*{Finite phase space}

Let $d$ be a prime and let $\integer_d$ be the cyclic group of integers modulo $d$.
The product group $\integer_d \times \integer_d$ can be consider as a finite phase space \cite{GiHoWo04}.
It has $d+1$ cyclic subgroups of order $d$;
the generating elements of these cyclic subgroups can be chosen to be $(0,1)$ and $(1,0),(1,1),\ldots,(1,d-1)$.

The previously described construction works for an irreducible projective representation of degree $d^2/d=d$.
We fix an orthonormal basis $\{\varphi_k\}_{k\in\integer_d}$ of a $d$-dimensional Hilbert space $\hi$. 
We denote $\omega= e^{2\pi i /d}$ and define the following projective unitary representation $U$ of $\integer_d \times \integer_d$ in $\hi$:
\begin{equation}
U(x,y) \varphi_k =  \omega^{yk} \varphi_{k+x} \, . 
\end{equation}
The programming vectors $\psi_0,\psi_1,\ldots,\psi_{d-1}$ are eigenvectors of the operators $U(0,1)$ and $U(1,k), k=0,\ldots, d-1$, respectively.
They can be chosen to be:
\begin{eqnarray}
\psi_0 & = &  \varphi_0 \nonumber \\
\psi_k &=& \frac{1}{\sqrt{d}} \sum_{i=0}^{d-1} c^{(k)}_j \varphi_j, 
\end{eqnarray}
where $c^{(k)}_j=\omega^{\frac{1}{2}  jk (j-1) -j}$, $j,k = 0, \ldots , d-1$. 
In particular, we have $\mo{\ip{\psi_j}{\psi_k}}=1/\sqrt{d}$ for $j\neq k$. 
It follows from Sec.\,\ref{sec:3.2} that each of these eigenvectors will lead to a sharp observable via appropriate post-processing.

\section{Limitations on quantum programming}

Not much is known about the mutual dependencies of the programming states of general quantum observables, except those of sharp observables. Such scenarios are important since often the implemented observables differ from sharp ones. On the other hand, to our best knowledge the dependencies of post-processings and the programmed observables in post-processing assisted quantum programming have not been studied before. In this section we address these issues by introducing an inequality between measures of ``closeness'' of the programming resources and the corresponding programmed observables. We note that similar analysis was recently made for programming of quantum channels in \cite{Tukiainen2016}.

\subsection{Inequality for programming of observables}

A commonly used measure of distinguishability of quantum states is fidelity $F$. There are many equivalent ways to express fidelity, two of which we will recall next for the later use. Firstly, the theory of Uhlmann states that 
\begin{eqnarray}
F(\varrho_1, \varrho_2) = \max_{\varphi_1, \varphi_2} |\ip{\varphi_1}{\varphi_2}|,
\end{eqnarray} where the maximum is taken over all purifications $\varphi_i$ of $\varrho_i$, $i=1,2$ \cite{NielsenInfo}. Secondly, fidelity may also be recast as
\begin{eqnarray}
F(\varrho_1, \varrho_2) = \min_{\E\in \mc O(\hi)} \sum_i \tr{\E(i) \, \varrho_1}^{1/2} \tr{\E(i) \, \varrho_2}^{1/2},
\end{eqnarray} for all $\varrho_1,\varrho_2\in\sh$, where the minimum is taken over all observables $\E: \Omega_n \rightarrow \lh$ \cite{NielsenInfo}.
The latter formulation is closely related to so-called {\it Bhattacharyya coefficient} $B$ that quantifies the divergence between two (discrete) probability measures: $B(p,q) := \sum_i \sqrt{p(i) q(i)} $, where $p,q: \Omega_n \rightarrow [0,1]$, $\sum_{i=1}^n p(i) = \sum_{i=1}^n q(i) =1$.

From this point onwards we will use the shorthand notion $p^\Eo_\varrho(x) = \tr{\Eo(x) \, \varrho}$ and write $p^\Eo_\varphi$ if $\varrho=P_\varphi$. With these tools and definitions we are ready to formulate the results of this subsection.

\begin{proposition}\label{prop:mainobs}
Suppose two observables $\Eo_1$ and $\Eo_2 \in \mc O(\hi,\Omega_n)$ can be programmed with states $\xi_1$ and $\xi_2$, respectively. 
Then 
\begin{eqnarray}
F(\varrho_1, \varrho_2)F(\xi_1, \xi_2) \leq B(p^{\Eo_1}_{\varrho_1},p^{\Eo_2}_{\varrho_2})\end{eqnarray} 
for all $\varrho_1, \varrho_2 \in \sh$.
\end{proposition}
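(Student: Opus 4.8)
The plan is to reduce the claimed two-factor bound to two standard properties of the fidelity---its monotonicity under quantum channels and its multiplicativity over tensor products---after first rewriting the two outcome distributions as being produced by a \emph{single} observable acting on the joint post-interaction states. To this end, set $\sigma_i = \mc V(\varrho_i \otimes \xi_i) \in \mc S(\hi \otimes \K)$ for $i=1,2$. Starting from the definition \eqref{eq:repro} of the programmed observable, using the duality relation $\tr{\mc V^*(B)\,T} = \tr{B\,\mc V(T)}$ together with the fact that $\varrho_i \otimes \id_\K$ and $\id_\hi \otimes \xi_i$ commute, a direct computation yields
\begin{eqnarray}
p^{\Eo_i}_{\varrho_i}(x) = \tr{\Eo_i(x)\,\varrho_i} = \tr{(\id_\hi \otimes \Zo(x))\,\sigma_i}\,.
\end{eqnarray}
Thus both distributions arise from the common POVM $x \mapsto \id_\hi \otimes \Zo(x)$ on $\hi \otimes \K$, evaluated on $\sigma_1$ and $\sigma_2$ respectively.

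First I would bound the Bhattacharyya coefficient from below by the fidelity of the joint states. Since $x \mapsto \id_\hi \otimes \Zo(x)$ is one particular observable on $\hi \otimes \K$ (it is normalized because $\sum_x \Zo(x) = \id_\K$), the minimization formula for fidelity recalled above, applied to $\sigma_1$ and $\sigma_2$, gives at once
\begin{eqnarray}
F(\sigma_1, \sigma_2) \leq \sum_x \tr{(\id_\hi \otimes \Zo(x))\,\sigma_1}^{1/2}\,\tr{(\id_\hi \otimes \Zo(x))\,\sigma_2}^{1/2} = B(p^{\Eo_1}_{\varrho_1},\, p^{\Eo_2}_{\varrho_2})\,.
\end{eqnarray}

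Next I would bound $F(\sigma_1,\sigma_2)$ from below. Monotonicity of the fidelity under the channel $\mc V$ gives $F(\sigma_1, \sigma_2) = F(\mc V(\varrho_1 \otimes \xi_1),\, \mc V(\varrho_2 \otimes \xi_2)) \geq F(\varrho_1 \otimes \xi_1,\, \varrho_2 \otimes \xi_2)$, and multiplicativity of the fidelity over tensor products factorizes the right-hand side as $F(\varrho_1,\varrho_2)\,F(\xi_1,\xi_2)$. Chaining this with the previous display produces
\begin{eqnarray}
F(\varrho_1,\varrho_2)\,F(\xi_1,\xi_2) \leq F(\sigma_1,\sigma_2) \leq B(p^{\Eo_1}_{\varrho_1},\, p^{\Eo_2}_{\varrho_2})\,,
\end{eqnarray}
which is the assertion.

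The obstacle here is conceptual rather than computational: the key move is recognizing that the separate programming states $\xi_i$ and the shared interaction $\mc V$ can be absorbed into the single joint states $\sigma_i$ read out by a common pointer POVM, which is exactly what lets the two representations of fidelity interface. Once this identification is in place, the remaining steps are routine invocations of textbook fidelity inequalities, and the only point demanding genuine care is the direction of the data-processing inequality---the fidelity is \emph{non-decreasing} under $\mc V$, and it is precisely this direction that makes the lower bound on $F(\sigma_1,\sigma_2)$ go the right way.
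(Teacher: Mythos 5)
Your proof is correct, but it follows a genuinely different route from the paper's. You form the post-interaction joint states $\sigma_i = \mc V(\varrho_i\otimes\xi_i)$, observe that both outcome distributions come from the single pointer POVM $x\mapsto \id_\hi\otimes\Zo(x)$ evaluated on $\sigma_1,\sigma_2$, and then chain three textbook facts: the measurement (minimization) characterization of fidelity to get $F(\sigma_1,\sigma_2)\leq B(p^{\Eo_1}_{\varrho_1},p^{\Eo_2}_{\varrho_2})$, the data-processing inequality for $\mc V$, and multiplicativity of fidelity over tensor products. The paper instead never forms $\sigma_i$ or invokes monotonicity here: it fixes purifications $\varphi_i$ of $\varrho_i$ and $\phi_i$ of $\xi_i$, inserts the resolution $\sum_i \id\otimes\mc V^*(\id\otimes\Zo(i))\otimes\id = \id$ into the overlap $|\ip{\varphi_1\otimes\phi_1}{\varphi_2\otimes\phi_2}|$, applies the Cauchy--Schwarz inequality for the sesquilinear forms induced by the positive operators $\mc V^*(\id\otimes\Zo(i))$, and finally maximizes over purifications via Uhlmann's theorem to produce the product $F(\varrho_1,\varrho_2)F(\xi_1,\xi_2)$. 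Your version is more modular and makes transparent which standard fidelity properties are doing the work (in particular that the whole statement is really ``fidelity before the measurement bounds the Bhattacharyya coefficient after it''); the paper's version is self-contained, essentially re-deriving the needed directions of the measurement formula and the data-processing inequality in one Cauchy--Schwarz chain without citing them. Both are valid; your identification $p^{\Eo_i}_{\varrho_i}(x)=\tr{(\id_\hi\otimes\Zo(x))\,\sigma_i}$ is correctly justified from Eq.~\eqref{eq:repro} and the duality relation.
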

\begin{proof} 
Fix states $\varrho_1, \varrho_2 \in \sh$ and let $\varphi_1,\varphi_2 \in \hi' \otimes \hi$ be any purifications of them, respectively. Furthermore, let $\phi_1, \phi_2 \in \ki\otimes\ki'$ be the any purifications of the programming states $\xi_1,\xi_2$, respectively.
Then in the total Hilbert space $\hi' \otimes \hi \otimes \ki \otimes \ki'$ we have
\begin{eqnarray}
&&|\langle \varphi_1\otimes \phi_1 | \varphi_2\otimes \phi_2 \rangle | 
=| \sum_i \langle\varphi_1\otimes \phi_1 | \id \otimes \mc V^*( \id \otimes \Z(i) ) \otimes \id |\varphi_2\otimes \phi_2 \rangle | 
\nonumber \\
&&\leq \sum_i | \langle  \varphi_1\otimes \phi_1 | \id \otimes \mc V^*( \id \otimes \Z(i) ) \otimes \id |\varphi_2\otimes \phi_2 \rangle | \nonumber \\
&&\leq \sum_i \langle  \varphi_1\otimes \phi_1 |\id \otimes \mc V^*( \id \otimes \Z(i) ) \otimes \id | \varphi_1\otimes \phi_1 \rangle^{1/2} \times \nonumber \\
&&\times \langle \varphi_2\otimes \phi_2 |\id \otimes \mc V^*( \id \otimes \Z(i) ) \otimes \id |
\varphi_2\otimes \phi_2 \rangle^{1/2} \nonumber \\
&& = \sum_i \tr{\mc V^*( \id \otimes \Z(i) ) \, \varrho_1\otimes \xi_1}^{1/2}   \, \tr{ \mc V^*( \id \otimes \Z(i) )\, \varrho_2\otimes \xi_2}^{1/2} \nonumber \\
&&=
\sum_i \tr{\E_1 (i)\,\varrho_1}^{1/2} \tr{\E_2 (i) \,\varrho_2}^{1/2}.
\end{eqnarray}
Since
\begin{eqnarray}
\max_{\varphi_1, \varphi_2} \max_{\phi_1, \phi_2} |\langle \varphi_1\otimes \phi_1 | \varphi_2\otimes \phi_2 \rangle | = F(\varrho_1, \varrho_2) \, F(\xi_1, \xi_2),
\end{eqnarray}
we conclude the proof by taking the maxima over all the possible purifications in the above inequality. 
\qed \end{proof} 

Motivated by the previous proposition, we may define a quantity $B(\E_1,\E_2) = \inf_{\varrho_1,\varrho_2} \frac{B(p^{\E_1}_{\varrho_1},p^{\E_2}_{\varrho_2})}{F(\varrho_1,\varrho_2)}$ which sets the relation $F(\xi_1,\xi_2) \leq B(\E_1,\E_2)$ between the programmed observables $\Eo_i$ and the programming states $\xi_i$, $i=1,2$. We wish to point out that due to the double infima taken over the state space $B(\E_1,\E_2)$ is always well defined, although $\frac{B(p^{\E_1}_{\varrho_1},p^{\E_2}_{\varrho_2})}{F(\varrho_1,\varrho_2)}$ can be indefinite in general as the denominator may vanish.  Below we have listed some of its properties.
\begin{proposition}\label{prop:obsfid}
Let $\E_1, \E_2 \in \mc O(\hi, \Omega_n)$. The quantity $B(\E_1,\E_2)$ satisfies the following properties:
\begin{itemize}
\item[(B1)] $B(\E_1,\E_2)=B(\E_2,\E_1)$,
\item[(B2)] $B(\E_1,\E_2)\in [0,1]$,
\item[(B3)] $B(\E_1,\E_2)=1$ iff $\E_1=\E_2$,
\item[(B4)] $B(\E_1,\E_2)=B(U^* \E_1 U, U^* \E_2 U)$, for all unitaries $U$,
\item[(B5)] $B(\E_1,\E_2) \leq B(\mc E^*(\E_1),\mc E^*(\E_2))$,
\item[(B6)] $B(\E_1,\E_2) \leq B(\lambda\star\E_1,\lambda\star\E_2)$,
\end{itemize}
for all channels $\mc E^*:\mc L(\hi)\rightarrow\mc L(\hi')$ and post-processings $\lambda: (\Omega_n, \Omega_m) \rightarrow [0,1]$.
\end{proposition}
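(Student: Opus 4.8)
The plan is to verify each property directly from the defining formula $B(\E_1,\E_2) = \inf_{\varrho_1,\varrho_2} B(p^{\E_1}_{\varrho_1},p^{\E_2}_{\varrho_2})/F(\varrho_1,\varrho_2)$, using a small stock of standard facts that I would record first: (i) the Cauchy--Schwarz bound $B(p,q)\le (\sum_i p(i))^{1/2}(\sum_i q(i))^{1/2}$, which is an equality iff $p=q$ when $p,q$ are probability distributions; (ii) the two expressions for fidelity from the excerpt, in particular $F(\varrho_1,\varrho_2)=\min_{\Fo}B(p^{\Fo}_{\varrho_1},p^{\Fo}_{\varrho_2})$; (iii) unitary invariance $F(U\varrho_1 U^*,U\varrho_2 U^*)=F(\varrho_1,\varrho_2)$ together with monotonicity $F(\mc E(\sigma_1),\mc E(\sigma_2))\ge F(\sigma_1,\sigma_2)$ of fidelity under channels; and (iv) monotonicity of the Bhattacharyya coefficient under stochastic maps, $B(\lambda\star p,\lambda\star q)\ge B(p,q)$. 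Alongside these I would record the transformation rules $p^{U^*\E U}_\varrho=p^{\E}_{U\varrho U^*}$, $p^{\mc E^*(\E)}_\sigma=p^{\E}_{\mc E(\sigma)}$ (via the duality relation), and $p^{\lambda\star\E}_\varrho=\lambda\star p^{\E}_\varrho$, which convert each operation on observables into an operation on the state or on the output distribution.

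With these in hand, (B1)--(B3) are quick. For (B1), swapping $\E_1\leftrightarrow\E_2$ and simultaneously relabelling $\varrho_1\leftrightarrow\varrho_2$ leaves the infimand invariant by the symmetry of $B$ and $F$. For (B2), non-negativity is immediate, while the bound $\le 1$ follows by restricting the infimum to the diagonal $\varrho_1=\varrho_2=\varrho$, where $F(\varrho,\varrho)=1$ and Cauchy--Schwarz gives $B(p^{\E_1}_\varrho,p^{\E_2}_\varrho)\le 1$. For the backward direction of (B3), the fidelity formula (ii) yields $B(p^{\E_1}_{\varrho_1},p^{\E_2}_{\varrho_2})\ge F(\varrho_1,\varrho_2)$ whenever $\E_1=\E_2$, so every infimand is $\ge 1$ and $B(\E,\E)=1$ by (B2). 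The forward direction is the delicate one: assuming $B(\E_1,\E_2)=1$ and again specialising to $\varrho_1=\varrho_2=\varrho$ forces $B(p^{\E_1}_\varrho,p^{\E_2}_\varrho)\ge 1$, which combined with the Cauchy--Schwarz bound is an equality, whose equality case (i) gives $p^{\E_1}_\varrho=p^{\E_2}_\varrho$ for every state $\varrho$, hence $\E_1=\E_2$.

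Properties (B4)--(B6) all follow one template: rewrite the transformed infimand using the transformation rules, then apply the appropriate monotonicity to bound it below by an infimand of the original quantity. For (B4) the substitution $\sigma_i=U\varrho_i U^*$ is a bijection of $\sh$ preserving $F$, so the infimum is literally unchanged. For (B5), writing $\varrho_i=\mc E(\sigma_i)$ and using $F(\sigma_1,\sigma_2)\le F(\mc E(\sigma_1),\mc E(\sigma_2))$ shows that each infimand of $B(\mc E^*(\E_1),\mc E^*(\E_2))$ dominates $B(p^{\E_1}_{\varrho_1},p^{\E_2}_{\varrho_2})/F(\varrho_1,\varrho_2)\ge B(\E_1,\E_2)$. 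For (B6) the numerator only grows under classical post-processing while the denominator is untouched, giving the same conclusion.

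The main obstacle, as noted, is the forward implication in (B3): one must pass from the single scalar identity $B(\E_1,\E_2)=1$ to the operator equality $\E_1=\E_2$, and the cleanest route is to exploit that the Cauchy--Schwarz equality case must hold \emph{simultaneously for all} $\varrho$. A secondary point requiring care is the bookkeeping in (B5)--(B6): I must check that fidelity's monotonicity and the Bhattacharyya coefficient's monotonicity each push the infimand in the direction that produces the stated inequality, i.e.\ that applying a channel or a post-processing can only make the two programmed observables harder to tell apart.
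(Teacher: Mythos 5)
Your proposal is correct, and for (B1)--(B4) and (B6) it follows essentially the same route as the paper: restriction to the diagonal $\varrho_1=\varrho_2$ plus Cauchy--Schwarz for (B2), the equality case of Cauchy--Schwarz for the forward direction of (B3) (the paper phrases this contrapositively, but the content is identical), the fidelity formula $F(\varrho_1,\varrho_2)=\min_{\E}B(p^{\E}_{\varrho_1},p^{\E}_{\varrho_2})$ for the backward direction, and monotonicity of the Bhattacharyya coefficient under stochastic maps for (B6). The genuine difference is in (B5). The paper assumes a unitary dilation $\mc E^*(B)=\mathrm{tr}_{\ki'}[U^*(B\otimes\id_{\ki'})U\,\id_\hi\otimes\eta]$, rewrites the infimand over product states $\varrho\otimes\eta$, strips the unitary with (B4), and then descends back to $\hi$ via $\tr{\E(i)\otimes\id_{\ki'}\,\chi}=\tr{\E(i)\,\mathrm{tr}_{\ki'}[\chi]}$ together with monotonicity of fidelity under partial trace. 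You instead use the duality relation directly, $p^{\mc E^*(\E)}_{\sigma}=p^{\E}_{\mc E(\sigma)}$, so each infimand of $B(\mc E^*(\E_1),\mc E^*(\E_2))$ is bounded below by $B(p^{\E_1}_{\mc E(\sigma_1)},p^{\E_2}_{\mc E(\sigma_2)})/F(\mc E(\sigma_1),\mc E(\sigma_2))\ge B(\E_1,\E_2)$ using only the data-processing inequality for fidelity. This is shorter, avoids the existence of a dilation as a hypothesis (the paper's ``suppose there exists a Hilbert space $\ki'$\dots'' is then unnecessary), and makes clear that the only external input is fidelity monotonicity --- which the paper also ends up invoking anyway. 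The cost is that you take that monotonicity as a cited black box, whereas the paper's dilation argument effectively re-derives the relevant instance of it; either is acceptable here.
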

\begin{proof}
The first and the fourth properties follow trivially from the symmetry and unitary invariance of the fidelity of quantum states. The second one can be proved by noticing that $\inf_{\varrho_1,\varrho_2} \frac{B(p^{\E_1}_{\varrho_1},p^{\E_2}_{\varrho_2})}{F(\varrho_1,\varrho_2)}\leq \frac{B(p^{\E_1}_{\varrho},p^{\E_2}_{\varrho})}{F(\varrho,\varrho)} \leq 1 $, for all $\varrho\in\sh$, where the latter limit follows immediately from the Cauchy-Schwarz inequality.

To prove (B3), we first suppose that $B(\E_1,\E_2) = 1$, whilst supposing that $\E_1\neq \E_2$. Then there exists a state $\varrho$ such that $p^{\E_1}_\varrho\neq p^{\E_2}_\varrho$, implying that $1=B (p^{\E_1}_\varrho, p^{\E_2}_\varrho) < 1$. Therefore, $B(\E_1,\E_2) \leq \frac{B (p^{\E_1}_\varrho, p^{\E_2}_\varrho)}{F(\varrho, \varrho)} < 1$. This contradiction proves the implication $B(\E_1,\E_2) = 1 \Rightarrow \E_1 = \E_2$. On the other hand, suppose that $\E_1 = \E_2 = \E$. Then 
\begin{eqnarray}
 B(p^{\E_1}_{\varrho_1},p^{\E_2}_{\varrho_2}) &=& \sum_i \tr{\E(i) \,\varrho_1}^{1/2} \tr{\E(i) \, \varrho_2} ^{1/2}  \nonumber \\
 &\geq & \min_{\E}  \sum_i \tr{\E(i) \,\varrho_1}^{1/2} \tr{\E(i) \, \varrho_2} ^{1/2}  \nonumber \\
 & = & F(\varrho_1, \varrho_2).
\end{eqnarray}
Hence, $\frac{B(p^{\E_1}_{\varrho_1},p^{\E_2}_{\varrho_2})}{F(\varrho_1,\varrho_2)} \geq 1$ which, together with (B2), proves that $B(\E_1,\E_2)=1$.

We next prove (B5). Suppose there exists a Hilbert space $\ki'$ where $\mc E^*$ admits a unitary dilation, $\mc E^*(B) = \text{tr}_{\ki'}[ U^* ( B \otimes \id_{\ki'}) U \, \id_\hi \otimes \eta]$ for all $B\in\lh$ and some fixed unitary $U$ on $\hi \otimes \ki'$ and fixed state $\eta \in \mc S(\ki')$. Then
\begin{eqnarray}
& &B(p^{\mc E^* (\E_1)}_{\varrho_1},p^{\mc E^* (\E_2)}_{\varrho_2}) \nonumber \\
&=& \sum_i \tr{U^* ( \E_1(i) \otimes \id_{\ki'}) U \,  \varrho_1 \otimes \eta}^{1/2} \tr{\text{tr}_{\ki'}[U^* ( \E_2(i) \otimes \id_{\ki'}) U \, \varrho_2 \otimes \eta}^{1/2} \nonumber \\
& = & B(p^{U^*(\E_1 \otimes \id_{\ki'} ) U}_{\varrho_1\otimes \eta},p^{U^*(\E_2 \otimes \id_{\ki'} ) U}_{\varrho_2\otimes \eta}).
\end{eqnarray}
Therefore
\begin{eqnarray}
B(\mc E^*(\E_1),\mc E^*(\E_2)) &= & \inf_{\varrho_1,\varrho_2} \frac{B(p^{U^*(\E_1 \otimes \id_{\ki'} ) U}_{\varrho_1\otimes \eta},p^{U^*(\E_2 \otimes \id_{\ki'} ) U}_{\varrho_2\otimes \eta})}{F(\varrho_1\otimes \eta, \varrho_2\otimes \eta)} \nonumber \\
& \geq & B(U^*(\E_1 \otimes \id_{\ki'} ) U, U^*(\E_2 \otimes \id_{\ki'} ) U) \nonumber \\
&=& B(\E_1 \otimes \id_{\ki'} , \E_2 \otimes \id_{\ki'}) \nonumber \\
&= & \inf_{\chi_1,\chi_2} \frac{\sum_i \tr{\E_1(i) \otimes \id_{\ki'} \, \chi_1}^{1/2} \tr{\E_2(i) \otimes \id_{\ki'} \, \chi_2}^{1/2} }{F(\chi_1,\chi_2)} \nonumber \\
&\geq & \inf_{\chi_1,\chi_2} \frac{\sum_i \tr{\E_1(i) \, \text{tr}_{\ki'}[\chi_1]}^{1/2} \tr{\E_2(i) \, \text{tr}_{\ki'}[\chi_2]}^{1/2} }{F(\text{tr}_{\ki'}[\chi_1],\text{tr}_{\ki'}[\chi_2])}
\nonumber \\
&=& B(\E_1, \E_2),
\end{eqnarray}
where we have used the properties (B4) and the monotonicity of the fidelity, $F(\varrho_1, \varrho_2) \leq F(\mc C(\varrho_1),\mc C(\varrho_2))$ for all channels $\mc C:\mc T(\hi) \rightarrow \mc T( \hi'')$.

Finally, we prove (B6). Fix any post-processing $\lambda:(\Omega_n, \Omega_m) \rightarrow [0,1]$. Since $\sum_i \lambda(i | j) =1$ for all $j$ we have
\begin{eqnarray}
B(p^{\E_1}_{\varrho_1},p^{\E_2}_{\varrho_2}) &=& \sum_j \sum_i \lambda(i | j) \, \tr{\E_1(j) \,\varrho_1}^{1/2} \tr{\E_2(j) \, \varrho_2} ^{1/2} \nonumber \\
&\leq &  \sum_i \left(\sum_j \lambda(i | j) \, \tr{\E_1(j) \,\varrho_1}\right)^{1/2} \left(\sum_j \lambda(i | j) \, \tr{\E_2(j) \,\varrho_2}\right) ^{1/2} \nonumber \\
&= & B(\lambda\star p^{\E_1}_{\varrho_1},\lambda\star p^{\E_2}_{\varrho_2}),
\end{eqnarray} 
where the inequality follows from the Cauchy-Schwarz inequality. 
\qed \end{proof}

For $B(\E_1,\E_2)$ to be a satisfactory upper bound for the fidelity of the programming states, we should be able to bind it to some known results on the programmability of quantum observables. The following example establishes such a connection.

\begin{example}\label{ex:sharportho}
Suppose that two different sharp observables $\A_1$ and $\A_2 \in \mc O(\hi, \Omega_n)$ can be programmed with states $\xi_1$ and $\xi_2$ respectively. Then $F(\xi_1, \xi_2)=0$ or equivalently $\xi_1$ and $\xi_2$ are orthogonal. In verifying this claim we follow the proof of Prop.\,4 in \cite{HeTu15}. Let us fix disjoint sets $x,y \subset \Omega_n$ such that condition $\A_1(x) \A_2(y) \neq 0$ holds. Then there exist unit vectors $\varphi_1$ and $\varphi_2$ such that $\A_1(x)\varphi_1 = \varphi_1,$ $\A_2 (y) \varphi_2 = \varphi_2$ and $\ip{\varphi_1}{\varphi_2}\neq 0$. Furthermore, $\A_1(x^c) \varphi_1 = 0$ and $\A_2(y^c) \varphi_2 =0$. Hence, $p^{\A_1}_{\varphi_1}(y)=\ip{\varphi_1}{\A_1(y) \, \varphi_1}=0,$ $p^{\A_2}_{\varphi_2}(x)=0$ and $p^{\A_i}_{\varphi_i}((x\cup y)^c)=0,$ for $i=1,2$, so that we have
\begin{eqnarray}
B (p^{\A_1}_{\varphi_1}, p^{\A_2}_{\varphi_2})  &=& \sum_{x_i\in x}  \tr{\A_1(x_i) \, P_{\varphi_1}}^{\frac{1}{2}} \tr{\A_2(x_i) \, P_{\varphi_2}}^{\frac{1}{2}} \nonumber \\
&& +  \sum_{x_i\in y}  \tr{\A_1(x_i) \, P_{\varphi_1}}^{\frac{1}{2}} \tr{\A_2(x_i) \, P_{\varphi_2}}^{\frac{1}{2}} \nonumber \\
&&+ \sum_{x_i\in (x\cup y)^c}  \tr{\A_1(x_i) \, P_{\varphi_1}}^{\frac{1}{2}} \tr{\A_2(x_i) \, P_{\varphi_2}}^{\frac{1}{2}} \\
&=&0.
\end{eqnarray}
Since $F(\varphi_1, \varphi_2) = |\ip{\varphi_1}{\varphi_2}|\neq 0$, we have $F(\xi_1,\xi_2) \leq \frac{B (p^{\A_1}_{\varphi_1}, p^{\A_2}_{\varphi_2})}{F(\varphi_1, \varphi_2)} = 0$ which proves the claim. The above also shows that $B(\A_1,\A_2)=0$ for all unequal sharp observables $\Ao_1$ and $\Ao_2$. This example implies that no state programmable multimeter $\< \ki, \Zo, \mc V\>$ described by a separable Hilbert space $\ki$ can realize all observables, since there is only a countable number of orthogonal states in $\sk$.
\end{example}

\subsection{Inequality for post-processing assisted programming of observables}

As noted in Sect.\,\ref{sec:covariant}, post-processing can be an advantageous resource in quantum programming. In \cite{HeTu15} it was left as an open question whether it is possible to realize all observables if in addition to programming we allow for post-processing. Using the tools developed above we provide a negative answer to this question.

Following the above considerations, we define the ``closeness'' of two post-processings $\lambda_1$ and $\lambda_2:(\Omega_n, \Omega_m) \rightarrow [0,1]$ via 
\begin{eqnarray} 
F(\lambda_1, \lambda_2) &= & \inf_{p \in {\bf Prob}(\Omega_n)} B(\lambda_1 \star p, \lambda_2 \star p) \nonumber \\
&=& \inf_{p \in {\bf Prob}(\Omega_n)} \sum_j \sqrt{\sum_i \lambda_1(i|j) \, p(i)} \sqrt{\sum_i \lambda_2(i|j) \, p(i)} \, , \nonumber
\\
\end{eqnarray}
however, $F(\lambda_1, \lambda_2)$ can be more easily calculated using the following lemma. 
\begin{lemma}\label{lem:min}
$F(\lambda_1, \lambda_2) = \min_{i\in\{1,\ldots,n\} } \sum_j \sqrt{\lambda_1(i|j) } \sqrt{\lambda_2(i|j)}$.
\end{lemma}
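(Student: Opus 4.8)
The plan is to recognize the function being minimized as a concave function on the probability simplex $\mathbf{Prob}(\Omega_n)$, so that its infimum is forced down to the vertices. Write $G(p) = B(\lambda_1\star p, \lambda_2\star p) = \sum_j \sqrt{(\lambda_1\star p)(j)}\,\sqrt{(\lambda_2\star p)(j)}$ for $p \in \mathbf{Prob}(\Omega_n)$, so that $F(\lambda_1,\lambda_2) = \inf_p G(p)$ by definition. The extreme points of the simplex are the point masses $\delta_i$, $i=1,\ldots,n$, and a direct evaluation using $(\lambda_k\star\delta_i)(j) = \lambda_k(i|j)$ gives $G(\delta_i) = \sum_j \sqrt{\lambda_1(i|j)}\,\sqrt{\lambda_2(i|j)}$. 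Since each $\delta_i$ is itself an admissible $p$, the inequality $F(\lambda_1,\lambda_2) \le \min_i G(\delta_i)$ is immediate; the content of the lemma is therefore the reverse inequality.

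The crux is the concavity of $G$. The key observation is that the geometric mean $(a,b)\mapsto\sqrt{ab}$ is concave on $[0,\infty)^2$; the cleanest way to see this is via the representation $\sqrt{ab} = \inf_{t>0}\tfrac12(ta + t^{-1}b)$, which exhibits it as a pointwise infimum of affine functions and hence as a concave function. For each fixed $j$, the maps $p\mapsto(\lambda_1\star p)(j)=\sum_i\lambda_1(i|j)\,p(i)$ and $p\mapsto(\lambda_2\star p)(j)$ are linear in $p$, so $p\mapsto\sqrt{(\lambda_1\star p)(j)\,(\lambda_2\star p)(j)}$ is the composition of the concave geometric mean with an affine map, hence concave. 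Summation over $j$ preserves concavity, so $G$ is concave on the simplex.

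Finally, concavity delivers the result through Jensen's inequality: writing any $p = \sum_i p(i)\,\delta_i$, concavity gives $G(p) \ge \sum_i p(i)\,G(\delta_i) \ge \min_i G(\delta_i)$, whence $\inf_p G(p) \ge \min_i G(\delta_i)$. Combined with the trivial reverse inequality this yields $F(\lambda_1,\lambda_2)=\min_i \sum_j \sqrt{\lambda_1(i|j)}\,\sqrt{\lambda_2(i|j)}$. I expect no serious obstacle here: the only point requiring genuine care is establishing the joint concavity of the geometric mean cleanly, and the infimum-of-affine-functions representation sidesteps any Hessian computation. Once that is in place, the vertex-minimization via Jensen and the evaluation of $G$ at the point masses are entirely routine.
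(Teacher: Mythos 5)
Your proposal is correct, and it reaches the lemma by a genuinely different route than the paper. The paper's proof of the nontrivial inequality is a direct two-line computation: it applies the Cauchy--Schwarz inequality in the form $\sqrt{\sum_i x_i}\,\sqrt{\sum_i y_i} \ge \sum_i \sqrt{x_i y_i}$ to each term $\sqrt{(\lambda_1\star p)(j)}\,\sqrt{(\lambda_2\star p)(j)}$, pulls the square roots inside the sum over $i$, swaps the order of summation to obtain $\sum_i p(i)\sum_j\sqrt{\lambda_1(i|j)\lambda_2(i|j)}$, and bounds this convex combination below by its smallest term. You instead establish the structural fact that $G(p)=B(\lambda_1\star p,\lambda_2\star p)$ is concave on the simplex --- via the representation $\sqrt{ab}=\inf_{t>0}\tfrac12(ta+t^{-1}b)$ of the geometric mean as an infimum of affine functions --- and then invoke Jensen's inequality to push the infimum to the vertices. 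The two arguments are really two faces of the same fact: the superadditivity inequality the paper extracts from Cauchy--Schwarz is exactly the joint concavity of $(a,b)\mapsto\sqrt{ab}$ combined with its positive homogeneity. What your version buys is conceptual clarity and generality --- it makes transparent \emph{why} the infimum sits at a point mass (concavity over the simplex) and would survive replacing the Bhattacharyya kernel by any other jointly concave, positively homogeneous function of the two output distributions; what the paper's version buys is brevity, needing no discussion of concavity or of behaviour at the boundary of $[0,\infty)^2$. Both are complete; your only point of care, the joint concavity of the geometric mean, is handled correctly by the infimum-of-affine-functions device.
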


\begin{proof}
Trivially $F(\lambda_1, \lambda_2) \leq \min_{i\in\{1,\ldots,n\} } \sum_j \sqrt{\lambda_1(i|j) } \sqrt{\lambda_2(i|j)},$ since $\delta_i(x_i)=1,$ $\delta_i(\Omega_n\setminus x_i)=0$ are probability measures on $\Omega_n$ for every $i$. On the other hand, using the Cauchy-Schwarz inequality, we confirm that 
\begin{eqnarray} 
F(\lambda_1, \lambda_2) &\geq & \inf_{p \in {\bf Prob}(\Omega_n)} \sum_i p(i) \sum_j \sqrt{ \lambda_1(i|j) \, \lambda_2(i|j) }  \nonumber \\
&\geq & \min_{i\in\{1,\ldots,n\} } \sum_j \sqrt{ \lambda_1(i|j) \, \lambda_2(i|j) }. 
\end{eqnarray}
\qed \end{proof}

The next proposition is the main result of this section.
\begin{proposition}\label{prop:postp1} 
Suppose two observables $\Eo_1$ and $\Eo_2 \in \mc O(\hi, \Omega_n)$ can be programmed with states $\xi_1$ and $\xi_2$, respectively. 
Then for any post-processings $\lambda_1$ and $\lambda_2:(\Omega_n, \Omega_m) \rightarrow [0,1]$ 
\begin{eqnarray}
F(\varrho_1, \varrho_2)F(\xi_1, \xi_2)F(\lambda_1, \lambda_2) \leq B(\lambda_1 \star p^{\Eo_1}_{\varrho_1},\lambda_2 \star p^{\Eo_2}_{\varrho_2}), 
\end{eqnarray} for all $\varrho_1, \varrho_2 \in \sh$.
\end{proposition}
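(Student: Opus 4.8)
The plan is to reduce the inequality to the two results already established, Proposition~\ref{prop:mainobs} and Lemma~\ref{lem:min}, through a single application of the Cauchy--Schwarz inequality at the level of the post-processed distributions. Throughout I fix $\varrho_1,\varrho_2\in\sh$ and abbreviate $p_1=p^{\Eo_1}_{\varrho_1}$ and $p_2=p^{\Eo_2}_{\varrho_2}$, two probability distributions on $\Omega_n$; the asserted bound will be obtained pointwise in $\varrho_1,\varrho_2$, which is exactly the form in which it is stated, so no infimum over states is needed inside the argument.

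First I would write the right-hand side explicitly as
\[
B(\lambda_1\star p_1,\lambda_2\star p_2) = \sum_j \sqrt{\sum_i \lambda_1(i|j)\,p_1(i)}\,\sqrt{\sum_i \lambda_2(i|j)\,p_2(i)}\,,
\]
and, for each fixed output $j$, apply the Cauchy--Schwarz inequality to the vectors with components $\sqrt{\lambda_1(i|j)\,p_1(i)}$ and $\sqrt{\lambda_2(i|j)\,p_2(i)}$. This produces the lower bound
\[
\sqrt{\sum_i \lambda_1(i|j)\,p_1(i)}\,\sqrt{\sum_i \lambda_2(i|j)\,p_2(i)} \geq \sum_i \sqrt{\lambda_1(i|j)\,\lambda_2(i|j)}\,\sqrt{p_1(i)\,p_2(i)}\,.
\]
Summing over $j$ and exchanging the order of summation then factors out the quantity $\sum_j\sqrt{\lambda_1(i|j)\,\lambda_2(i|j)}$ for each fixed index $i$.

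Next I would invoke Lemma~\ref{lem:min}, which identifies $F(\lambda_1,\lambda_2)$ with $\min_i\sum_j\sqrt{\lambda_1(i|j)\,\lambda_2(i|j)}$; hence each of these inner factors is at least $F(\lambda_1,\lambda_2)$, and pulling this constant out leaves $F(\lambda_1,\lambda_2)\sum_i\sqrt{p_1(i)\,p_2(i)}=F(\lambda_1,\lambda_2)\,B(p_1,p_2)$. Finally, Proposition~\ref{prop:mainobs} gives $B(p_1,p_2)\geq F(\varrho_1,\varrho_2)\,F(\xi_1,\xi_2)$, and chaining the three estimates delivers the claim.

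I expect no genuine obstacle: the proof is a short composition of known bounds. The one point demanding care is the direction in which Cauchy--Schwarz is deployed -- here it must yield a \emph{lower} bound on the product of two square roots, in contrast to its upper-bound use in the proof of property (B6) -- together with bookkeeping of which index is summed at each stage.
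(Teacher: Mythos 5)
Your proof is correct and is essentially the paper's own argument read in the opposite direction: the same three ingredients (Cauchy--Schwarz applied termwise in $j$ to relate $B(\lambda_1\star p_1,\lambda_2\star p_2)$ to $\sum_{i,j}\sqrt{\lambda_1(i|j)\lambda_2(i|j)p_1(i)p_2(i)}$, Lemma~\ref{lem:min} to extract $F(\lambda_1,\lambda_2)$, and Proposition~\ref{prop:mainobs} to bound $B(p_1,p_2)$ from below) appear in the paper's proof, which merely chains them starting from the left-hand side instead of the right. Your remark about the direction of the Cauchy--Schwarz step is apt, but it is the identical inequality the paper uses, so there is no substantive difference.
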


\begin{proof}
By the previous Lemma\,\ref{lem:min} we have $F(\lambda_1, \lambda_2) \leq \sum_j \sqrt{\lambda_1(i|j) } \sqrt{\lambda_2(i|j)}$ for all $i=1,\ldots,n$. From Prop.\,\ref{prop:mainobs} we already know $F(\varrho_1, \varrho_2)F(\xi_1, \xi_2) \leq B(p^{\E_1}_{\varrho_1}, p^{\E_2}_{\varrho_2})$. Therefore,
\begin{eqnarray}
& &F(\varrho_1, \varrho_2)F(\xi_1, \xi_2)F(\lambda_1, \lambda_2) \nonumber \\
&\leq & \sum_i \sum_j   \left(\lambda_1(i|j) \, \tr{\E_1(i) \,\varrho_1}\right)^{1/2} \left(\lambda_2(i|j) \, \tr{\E_2(i) \,\varrho_2}\right)^{1/2} \nonumber \\
&\leq & \sum_i \left(  \sum_j \lambda_1(i|j) \, \tr{\E_1(i) \,\varrho_1} \right)^{1/2} \, \left(  \sum_j \lambda_2(i|j) \, \tr{\E_2(i) \,\varrho_2} \right)^{1/2} \nonumber \\
&=& B(\lambda_1 \star p^{\E_1}_{\varrho_1},\lambda_2 \star p^{\E_2}_{\varrho_2}),
\end{eqnarray}
where the second estimate is due to the Cauchy-Schwarz inequality.
\qed \end{proof}

It follows from the previous proposition that $F(\xi_1, \xi_2)F(\lambda_1, \lambda_2) \leq B(\lambda_1 \star \E_1,\lambda_2 \star \E_2)$. Suppose now that it is possible to realize two different sharp observables via post-processing assisted programming. From Ex.\,\ref{ex:sharportho} it can be concluded then that either $F(\xi_1, \xi_2)=0$ or $F(\lambda_1, \lambda_2)=0$. Assume that the Hilbert space $\ki$ of the multimeter $\< \ki, \Zo, \mc V\>$ is separable and the pointer observable $\Zo$ is $m$-valued. Then one can use post-processing assisted programming to realize at most $\dim (\ki) \cdot \text{Bin}\big(m,n\big)$ sharp observables with $n$-outcomes with $\< \ki, \Zo, \mc V\>$. Here $\text{Bin}\big(m,n\big)$ is the binomial coefficient $\text{Bin}\big(m,n\big)= \frac{m!}{n!(m-n)!}$ which characterizes the size of the set $\{\lambda_i: (\Omega_n, \Omega_m) \rightarrow [0,1] \,|\, F(\lambda_i, \lambda_j) =0\}$. This proves that, with the above assumptions, even post-processing assisted programming cannot universally realize all quantum observables comprising an uncountable set.

We end the section by pointing out that the trade-off between the fidelities $F(\xi_1, \xi_2)$ and $F(\lambda_1, \lambda_2)$ when programming sharp observables is more subtle than what is captured by $F(\xi_1, \xi_2) F(\lambda_1, \lambda_2) = 0$. Let us recall, that in the quaternion example of Sect.\,\ref{sec:covariant} we considered the post-processing assisted programming of three sharp spin-observables where, in particular, the programming states $P_{\psi_i}$ satisfy $F(P_{\psi_i},P_{\psi_j})=1/\sqrt{2}$, $i\neq j$. It turns out that the choice of states $P_{\psi_i}$, $i=1,2,3$, is optimal in the following sense.
\begin{figure}[t!]
\includegraphics[width=1.0\textwidth]{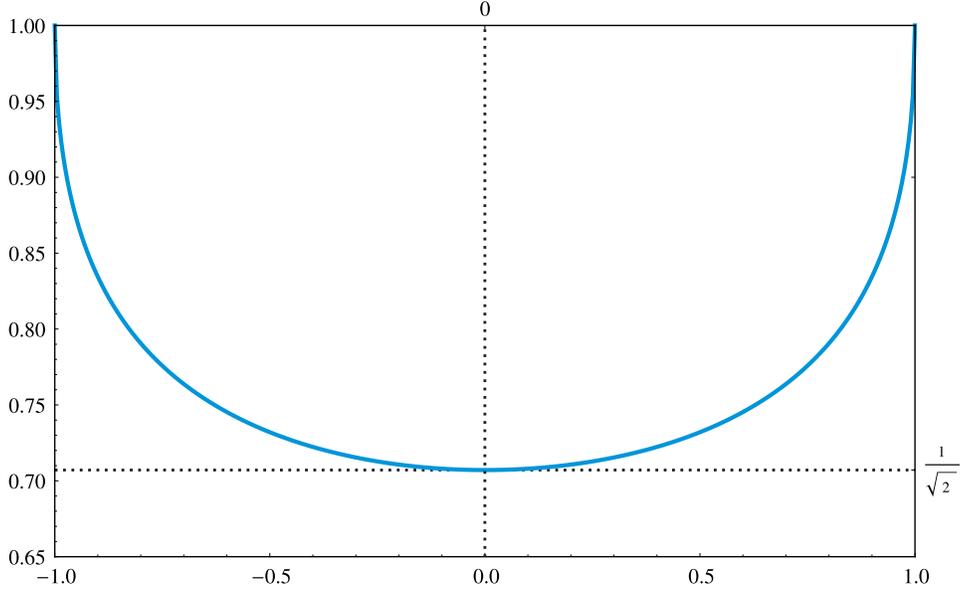}
\caption{A universally valid upperbound for the fidelity of programming states $\xi_1$ and $\xi_2$ in post-processing assisted programming of sharp spin-observables $\Ao_i(\pm)=\frac{1}{2}\left(\id_{\C^2} \pm \vec{a}_i \cdot \vec{\sigma} \right)$ versus $\vec a_1 \cdot \vec a_2$. The bound, due to the right-hand-side of Ineq.\,\eqref{eq:sharpmin}, attains its minimum $1/\sqrt{2}$ for $\vec a_1 \cdot \vec a_2 = 0$. }\label{fig:combi}
\end{figure}
\begin{example}
Let $\xi_1$ and $\xi_2$ be the programming states of any observables $\E_1$ and $\E_2:\Omega_m \rightarrow \lh$ that can be post-processed into sharp spin-observables $\Ao_i(\pm)=\frac{1}{2}\left(\id_{\C^2} \pm \vec{a}_i \cdot \vec{\sigma} \right)$, $\vec a_i \in \R^3,$ $||\vec a_i ||=1$, $i=1,2$, respectively. Due to the extremality of $\Ao_i$ there exist $x,y\subset\Omega_m$ such that $\Ao_1(+) = \Eo_1(x)$ and $\Ao_2(+) = \Eo_2(y)$ \cite{JePu07}. In particular, for the set $\Omega = \{x\cap y, x\cap y^c, x^c \cap y, x^c \cap y^c \}$ the following table holds

\begin{table}[h!]
\centering
\label{tab:sets}
\begin{tabular}{l|cc}
               & $\Eo_1(\cdot)$                		& $\Eo_2(\cdot)$      \\ \hline
$\phantom{^c}x\cap y$      & $\phantom{^c}p_+(y) \Ao_1(+)$		& $\phantom{^c}q_+(x) \Ao_2(+)$     \\
$\phantom{^c}x\cap y^c$    & $p_+(y^c) \Ao_1(+)$				& $\phantom{^c}q_-(x) \Ao_2(-)$     \\
$x^c \cap y$   & $\phantom{^c}p_-( y) \Ao_1(-)$     & $q_+(x^c) \Ao_2(+)$ \\
$x^c \cap y^c$ & $p_-(y^c) \Ao_1(-)$				& $q_-(x^c) \Ao_2(-)$
\end{tabular}
\end{table}
\noindent where $p_\pm$ and $q_\pm:\Omega_m \rightarrow [0,1]$ are some probability measures. Defining a merging type post-processing $\lambda:(\Omega_m, \Omega) \rightarrow [0,1]$ according to the previous table we have 
$$
F(\xi_1,\xi_2) \leq B(\lambda\star \Eo_1,\lambda\star \Eo_2) \leq B(\lambda\star p^{\Eo_1}_{\varrho_1},\lambda\star p^{\Eo_2}_{\varrho_2})/F(\varrho_1,\varrho_2)
$$
for all states $\varrho_1,\varrho_2$ satisfying $F(\varrho_1,\varrho_2)\neq 0$. By choosing these states appropriately, we get the following inequality
\begin{eqnarray}\label{eq:sharpmin}
&& F(\xi_1,\xi_2) \nonumber \\ &\leq& \sqrt{2}  \min \left(  \frac{ \sqrt{p_+(y) q_+(x)}}{\sqrt{1+ \vec{a}_1 \cdot \vec{a}_2}} ,\frac{\sqrt{ p_+(y^c) q_-(x)}}{\sqrt{1-\vec{a}_1 \cdot \vec{a}_2 }}, \frac{\sqrt{ p_-( y) q_+(x^c)}}{\sqrt{1-\vec{a}_1 \cdot \vec{a}_2 }},\frac{\sqrt{ p_-(y^c) q_-(x^c)}}{\sqrt{1+\vec{a}_1 \cdot \vec{a}_2 }} \right) \nonumber \\
\end{eqnarray}
For example, the first term follows from choosing $\varrho_i = \Ao_i(+)$, $i=1,2$.
By maximizing the right-hand-side over $p_\pm$ and $q_\pm$ we get a universally valid upper bound for programming states $\xi_1$ and $\xi_2$ in post-processing assisted quantum programming, plotted in Fig.\,\ref{fig:combi} as a function of $\vec a_1 \cdot \vec a_2$.
For the orthogonal spin directions, $\vec a_1 \cdot \vec a_2 = 0$, the upper bound reaches its minimum $1/\sqrt{2}$.
\end{example}

\section{Summary and outlook}

The upside of designing a programmable device instead of a single purpose machine is, that it is more economical on resources. Accordingly, programmable multimeters have applications in schemes where one wants to change measurements on the fly, such as in quantum-state discrimination, in quantum cryptography and eavesdropping, and in measurement based quantum computation, to name a few. Although the exact universality of a programmable multimeter is known to be unachievable, it is obviously desirable to have one that is as optimal as possible (for a given task). 

Related to the above, we showed that it is advantageous to include classical post-processing as a part of quantum programming scenarios. In addition, we revealed a relation between the programmed observables and the corresponding programming and post-processing resources. This relation allowed us to prove that even post-processing assisted programming cannot universally realize all quantum observables. We believe that our survey establishes a mathematical basis for further investigation of advantages and limitations of the three processes: classical post-processing, quantum programming and the combined protocol of post-processing assisted programming.

We will finish with a short discussion about one concrete application of post-processing assisted programming. In Ref.\,\cite{Dusek2002} the question ``What is the form of the optimal multimeter to approximate within a fixed error all observables in a given Hilbert space?'' was asked, however, in lack of a proper distance measure of observables the authors left this question open. Consequently, a programming scenario providing a partial answer to this question was proposed in Ref.\,\cite{DaPe05}, and it was soon shown to be the optimal one in  Ref.\,\cite{Perez06}. This scenario, however, only concentrates on approximating observables with fixed number of outcomes and therefore does not answer the original problem in its full generality. In vague terms, one can attempt to solve the problem by programming an optimal set of observables, the elements of which could then be classically post-processed to approximate any observable. We believe that the distance measure of observables $B(\Eo_1,\Eo_2)$ we have introduced can serve as a good candidate for answering such optimality questions, in particular due to its direct relation with the corresponding programming and post-processing resources. However, to answer the above optimality question in the fully general setting will likely require an intricate interplay between programming and post-processing -- the details of which are beyond the scope of this work -- and the question of Ref.\,\cite{Dusek2002} is hence left open for future investigation.

\section*{Acknowledgements}

M.T. acknowledges financial support from the University of Turku Graduate School (UTUGS).
The authors are grateful to Juha-Pekka Pellonp\"a\"a and Tom Bullock for their comments on the manuscript.



\end{document}